\newtheorem{theorem}{Theorem}
\newtheorem{corollary}{Corollary}
\title{\LARGE \bf
Influence Dynamics and Consensus in an Opinion-Neighborhood based Modified Vicsek-like Social Network}
\author{Narayani~Vedam and~Debasish~Ghose%
\thanks{N. Vedam is a PhD candidate and D. Ghose is a Professor at the Indian Institute of Science, Bangalore-560012, India. (e-mails: narayaniv+dghose@iisc.ac.in)}}
\begin{document}
\maketitle
\thispagestyle{empty}


\begin{abstract}
	We propose a modified Vicsek-like model to interpret influence dynamics and opinion formation in social networks. We work on the premise that opinions of members of a group may be considered to be analogous to the direction of motion of a particle in space. Similar to bounded-confidence models, interactions are based on closeness of opinions. The interactions are modeled by an adaptive network with dynamic node and tie weights. A mix of individuals - rigid and flexible - is assumed to constitute liberal and conservative groups. We analyze emergent group behaviors subject to different initial conditions, agent types, their densities and tolerances. The model accurately predicts the role of rigid agents in hampering consensus. Also, a few structural properties of the dynamic network, resulting as a consequence of the proposed model have been established.
\end{abstract}

\begin{IEEEkeywords}
Multi-agent systems, influence dynamics, opinion consensus, opinion neighborhood, social networks.
\end{IEEEkeywords}
%
%
\section{INTRODUCTION}
\IEEEPARstart{H}{uman} opinions are moulded through a continuous process, subject to a host of influences. In-person meetings, visual and print media, etc., remain the chief exogenous factors. With the wide spread of the internet, there is a drastic change in the way opinions are formed. They are increasingly shaped by information diffused through tweets and posts on virtual platforms. Such spreads tend to influence the behavior of individuals and thereby the societies. Due to their relevance to social and economic problems, there is a growing interest to understand the underlying dynamics. 

Influence dynamics and formation of opinions have been extensively studied from the perspective of analytical modeling and psychology. Early 1900s have witnessed social experiments that have enormously helped in understanding human behavior. With their aid, in the following decades, several behavioral models were proposed. Among them, one of the earliest formulations \cite{french1956formal}, interprets the formation of groups. \cite{stone1961opinion} proposed a continuous-time model for joint decision making. DeGroot \cite{degroot1974reaching} came up with a discrete-time repeated averaging model. Similar models which investigate the impact of agent weights on consensus can be found in \cite{chatterjee1977towards,cohen1986approaching}. However, these models study agreement in groups. The DeGroot's model was extended \cite{friedkin1990social} to consider possible disagreements. In all these models, opinions are real numbers within a range. A more simplistic model is the voters model, where opinions are discrete. It was first proposed in \cite{clifford1973model}, where agents adopt opinions through random interactions. Variations to the model have been proposed over the years \cite{follmer1974random,orlean1995bayesian,latane1997self,weisbuch1999dynamical}. All the network models \cite{french1956formal,stone1961opinion,degroot1974reaching,chatterjee1977towards,cohen1986approaching,friedkin1990social,clifford1973model,follmer1974random,orlean1995bayesian,latane1997self,weisbuch1999dynamical}, although different in their level of detail, are linear, collaborative, and mostly rely on chanced interactions. One of the initial attempts with antagonistic interactions is \cite{altafini2013consensus}, where negative edge weights reflect the lack of synergy. Extensions to this work appear in \cite{altafini2015predictable,proskurnikov2016opinion}. In all these models, the agents are similar and their interactions are uninhibited - an agent can interact with any other agent.

The first nonlinear opinion model is by Krause \cite{krause1997soziale}. Ever since, similar models have been proposed \cite{deffuant2000mixing,sznajd2000opinion,hegselmann2002opinion,lorenz2007continuous,mor2011opinion,yang2014opinion,liu2013opinion}. They recommend repeated opinion averaging, but differ in their communication regimes. Among them, the bounded-confidence model \cite{hegselmann2002opinion} is well known; an agent engages with others who conform to its belief. The interactions among agents may change with time, thereby capturing an aspect of relationship dynamics. This addressed one of the earlier shortcomings, while overlooking the possibility of agent types. 

The impact of agent types on consensus has been widely studied. One of the earliest works \cite{weisbuch2002meet}, studies agents with different interaction thresholds. \cite{kou2012multi,fu2015opinion,liang2013opinion}, classify agents as closed, open and moderate, and analyse different group compositions. \cite{faure2002dynamics,galam2004contrarian,amblard2004role,acemoglu2010spread,acemouglu2013opinion,delellis2017steering} classify agents based on opinions, and discuss the effect of contrarians, extremists and forceful agents. \cite{moussaid2013social} analyses the impact of well-informed minorities in the midst of an uninformed majority. In \cite{porfiri2007decline,yildiz2011discrete}, extremist minorities in a fairly stubborn society, and stubborn agents in a voters model have been explored. \cite{ghaderi2014opinion} studies stubbornness of agents as a criterion in a cost function.

Similar to bounded-confidence models, in the context of non-equilibrium systems, Vicsek \textit{et al.} \cite{vicsek1995novel} proposed a model with biologically inspired interactions to study the motion of self-propelled particles. Accordingly, at each time-step a particle driven with a constant absolute velocity adopts the the average heading of those in its $r-$neighborhood with some added perturbation. This nearest neighbor rule has indicated consensus about the heading, despite the absence of centralized co-ordination and a dynamic neighborhood. This, coupled with the model's simplicity, makes it a popular choice for studying robotic swarms. It has been extensively used for data fusion in sensor networks \cite{ogren2004cooperative}, collaboration of UAVs \cite{stipanovic2004decentralized}, and in explaining the behavior of animal groups \cite{chazelle2009convergence,shaw1962schooling}.

In this work, we adapt the model in \cite{vicsek1995novel} since the nearest-neighbor rule with dynamic local interactions epitomizes influence spread in networks. We assume that beliefs of members of a social group are analogous to the direction of motion of a particle in space. Unlike all bounded-confidence models, we distinguish agents within a neighborhood based on familiarity. Heterogeneous agents with non-uniform interaction thresholds have been considered. The opinion distribution is modeled as a truncated Gaussian, in contrast with a uniform spread encountered in most of the existing works. This is consistent with the assumptions and experimental validation of opinions collected from social networks \cite{boccara2008models,de2014learning,chacoma2015opinion}.

It is evident that beliefs and the underlying interaction network co-evolve. In most of the existing works, the changes in network are either due to randomly induced ties or as a consequence of the bounded-confidence assumption. In addition, \cite{zimmermann2005cooperation} employs a reward-or-penalize strategy for evolution of ties.  The impact of rewiring is analysed in \cite{kozma2008consensus,nardini2008s}. In the context of complex systems, \cite{delellis2017steering}, \cite{boccaletti2006complex,delellis2010evolution} explore a dynamic member-set, network based on proximity rule, and edge snapping, respectively. Instead, here, a network derived from individual opinions, thresholds and interactions evolve in accordance with prescribed rules, reflecting dynamic interpersonal relationships. Besides, neither are all agents nor are all their interactions equal; this is not only modeled by dynamic weights of nodes and edges, but also by familiarity-based interactions. These aspects have been rarely explored in conjunction with opinion dynamics. While such a model may not include all the complexities of a social network, our studies show that interesting phenomena can be observed in the results obtained from this model which give some insights into the evolution of opinion and consensus in social networks.

\section{THE MODELS}
\subsection{The original Vicsek Model}
Consider $N$ particles that are restricted to move in a periodic square box. The particles are randomly placed in the box and have the same absolute velocity. Their initial headings ($\theta_{i}(0)$) are uniformly chosen at random from within a range $(-\pi,\pi]$. At every time-step, the particles assume the average heading ($\langle\theta(t)\rangle_{r}$) of those within a circle of radius $r$ surrounding them, with some added noise ($\Delta\theta$). This average direction is given by $\arctan[\langle\sin(\theta(t))\rangle_{r}/\langle\cos(\theta(t))\rangle_{r}]$.
\subsection{The Modified Vicsek-like Model and Corresponding Social Network} 
Consider $N$ agents, each with a belief $\theta_{i}(t) \in [0,\pi]$. An agent engages with those others, whose beliefs do not deviate from its own by more than a fixed tolerance ($\theta_{T_{i}}$). This is similar to interactions over virtual networks that are oblivious to geographic locations of agents and the distances separating them. The like-minded individuals who influence an agent, constitute its opinion neighborhood,
\begin{equation}
N_{i}(t)~:=~\{~j~:~|\theta_{i}(t)-\theta_{j}(t)|~\leq ~\theta_{T_{i}}\}.
\end{equation}
In Fig. \ref{fig1}, the vectors represent opinions and shaded sectors indicate tolerances. It is clear that the sectors of agents $1$ and $2$ overlap the opinion vectors of $2$ and $1$, respectively. The same can be observed with $2$ and $3$, indicating they are neighbors.
\begin{figure}[t!]	
	\centering
	\includegraphics[width=4.5cm]{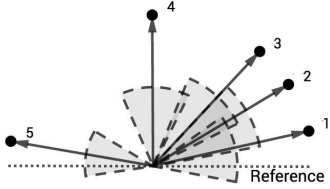}
	\vspace{-5pt}
	\caption{Opinion vectors with shaded sectors indicating tolerance ($\theta_{T_{i}}$)}	
	\label{fig1}\vspace{-15pt}
\end{figure}

The agents and their influence can be abstracted using vertices and directed edges of a graph $G(t)= (V,E(t))$. Its vertex set $V$ is invariable and is a collection of $N$ agents. Its edge set $E(t)$ is a collection of all the interactions at time $t$,
\begin{equation}
 E(t)~\subseteq~\{~(i,j)~:~j\in~N_{i}(t)~,~\forall i,~i~\neq~j~\}.
 \end{equation} 
 The directed edge $(i,j)$ is the influence of $i$ on $j$. To illustrate, consider the agents and their tolerance as portrayed in Fig.\ref{fig1}. The corresponding vertex and edge sets are $V=\{1,2,3,4,5\}$ and $E(t)=\{(1,2),(2,1),(2,3),(3,1)\}$, respectively. This is represented in Fig. \ref{fig2a} where there are bidirectional edges between vertex pairs $(1, 2)$ and $(2, 3)$. The beliefs of $4$ and $5$ are different from the rest, and are isolated. Fig. \ref{fig2b} depicts a more realistic scenario, where influences are not reciprocated. It can be observed that 2 influences 1, while 1 does not influence 2. Similarly, 3 influences 2 and not vice versa. 
\begin{figure}[t!]
	\centering
	\subfigure[Agents and their influence]{\includegraphics[width=4.cm]{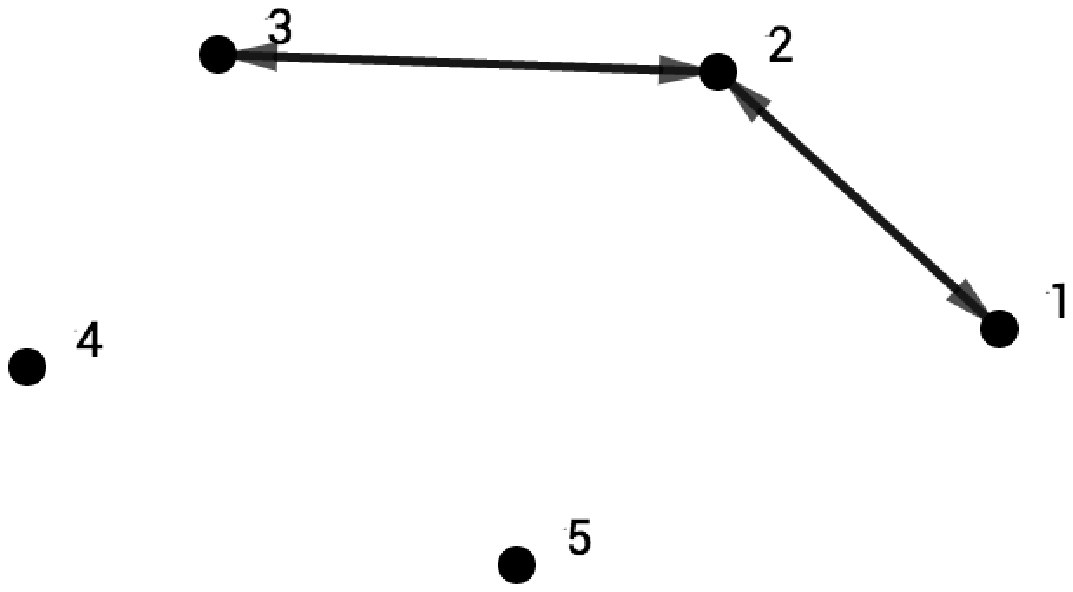}
		\label{fig2a}}
	\subfigure[A more realistic scenario]{\includegraphics[width=4.cm]{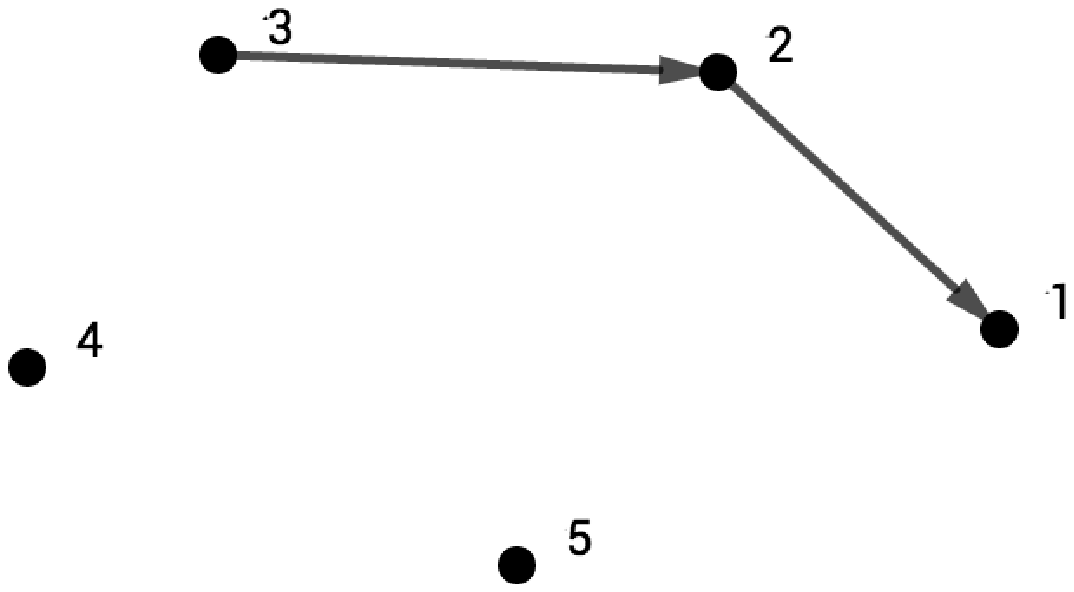}\label{fig2b}}
	\caption{Representation of agents and their interactions using a graph $G(t)$}
	\label{fig2}
	\vspace{-15pt}
\end{figure}

The ad-hoc nature of human interactions is modeled by asynchronous updates. When an agent modifies its belief under the influence of its neighbors ($\langle \theta_{i}(t)\rangle$), it is governed by 
\begin{align}\label{e1}
\begin{split}
&\theta_{i}(t+1) = \langle \theta_{i}(t)\rangle,\\ 
& \tan(\langle \theta_{i}(t)\rangle)=\frac{\sum\limits_{j\in N_{D_{i}}(t)\cup \bar{N}_{ND_{i}}(t)} w_{ij}(t) \sin(\theta_{j}(t))}{ \sum\limits_{j\in N_{D_{i}}(t)\cup \bar{N}_{ND_{i}}(t)}w_{ij}(t) \cos(\theta_{j}(t))}.
 \end{split}
 \end{align}
Whenever agents interact and update their beliefs, the network topology may change; the updated belief of an agent may impact its neighborhood set, and thereby, the network's edges.

The edges are classified as direct and non-direct ties. Direct ties are an agent's frequent contacts belonging to $N_{D_{i}}(t)$, and are its one-hop neighbors. Non-direct ties are occasional contacts belonging to $N_{ND_i}(t)$, and are an agent's two-hop or three-hop neighbors, such that,
\begin{equation}
N_{i}(t) = N_{D_{i}}(t)\cup N_{ND_{i}}(t).
\end{equation} 
This is essential because, like-minded agents may be unaware of each others' presence in a vast network. Empirical analyses establishing the importance of such influences can be found in \cite{aral2009distinguishing}. When an agent updates its belief using (\ref{e1}), it considers $N_{D_{i}}(t)$ and a randomly chosen subset, $\bar{N}_{ND_{i}}(t) \subseteq N_{ND_{i}}(t)$. This way, noise is inherent in our model; greater an agent's tolerance, more susceptible it is to such chanced interactions. However, there exist noise-free extensions to the Vicsek model that analyse heterogenous thresholds \cite{yang2006consensus}. 

Trust scores - $w_{ij}(t),~\text{with}~j~\neq~i$, in (\ref{e1}) associated with edges signify the strength of an influence. Initially, direct ties are assigned random trust scores ($w_{ij}(t)$). Between a pair of vertices $(j,i)$, a sequence of directed edges connecting them is a directed path. To compute the weight of a non-direct tie, all such paths between the vertex pair are identified. The weight of each path is defined as the product of weights of the edges that constitute them. The score of a temporary edge is the average of the weights of all such paths between a pair $(j,i)$;
\begin{align}
\begin{split}
&w_{ij}(t)=\frac{\sum\limits_{p=1}^{p_{max}}\prod\limits_{(m,n)\in E_{path}}w_{mn}(t)}{p_{max}},\\
&E_{path}=\{(m,n)|\ (m,n)\in \text{the directed path}~ (j,i) \},\\
&p_{max}=|E_{path}|,~\sum_{j,j\neq i}w_{ij}(t)~=~ 1-w_{ii}(t).
\end{split}
\end{align}
Hence, the influences of direct ties are stronger. Evidently, the trust scores are proposed based on familiarity of agents - frequent or occasional contact. This is unlike \cite{da2015sudden} which is an exclusive treatment of trust based on several psychological parameters. It is also different from \cite{fu2008reputation,delellis2017evolving}, where trust or reputation is built over time through interactions.

The self-weights ($w_{ii}(t)$) reflect an agent's status in the network. To measure them, we use Katz's centrality \cite{katz1953new}, 
\begin{equation}
W(t)~=~\beta(I-\alpha A^{T}(t))^{-1}~.~\mathbf{1},
\end{equation} 
where, $\beta$ is a positive non-zero bias, $\alpha <1/\lambda_{max}$, $A(t)$ is the adjacency matrix of the underlying network, and $W(t)$ is a vector of self-weights. Accordingly, the importance of a node is based not only on its degree, but also on the nature of its contacts. Essentially, a node's connection to influential nodes yields a higher self-weight than ties with less prominent ones. Additionally, a socially important individual is less obligated to yield to influence, which is reflected in (\ref{e1}), where, self-weights quantify the importance accorded to one's opinion. 

Upon interactions, new connections are forged only with non-direct ties, since, humans do not reach-out beyond them \cite{christakis2009connected}. A new tie is gained when the cumulative number of random interactions between agent $i$ and its non-direct tie $j$, exceeds its sociability index ($s_{i}$). Under the assumption that socially important agents do not forge ties easily \cite{bruggeman2013social}, the index is assumed to be proportional to self-weight;
\begin{equation}
s_{i} = K_{C}w_{ii},
\end{equation}
 where, $K_{C}$ is a proportionality constant, chosen based on the network size. Also, it has been observed that inter-personal bonds are not easily lost \cite{leary2017need}. Therefore, only the ties between agents with dissimilar opinions disappear; agent $i$ loses a direct tie $j$ when the following condition is violated, 
 \begin{equation}
 |\theta_{i}(t)-\theta_{j}(t)|\leq\theta_{T_{i}}.
 \end{equation}

\section{PROBLEM FORMULATION}
Consider $N$ agents, each with a belief $\theta_{i}(t)$ and a tolerance $ \theta_{T_{i}}$. The beliefs are updated using (\ref{e1}), and the interactions are governed by rules prescribed in the previous section. As beliefs change, the underlying network topology evolves. This interplay yields different collective behaviors. We are interested in studying consensus or the lack of it, subject to different (\romannum{1}) initial opinion spreads, (\romannum{2}) constituent agent types, (\romannum{3}) individual tolerance, and (\romannum{4}) densities of agents.

\subsection{Classification of Agents}
The agents are grouped based on their tolerances;

\subsubsection{Rigid} These are stubborn individuals that dialogue with those with similar beliefs. Their intolerance to diverse beliefs is modeled with lower thresholds, $\theta_{T_{i}}\in[0,\theta_{R}]$.

\subsubsection{Flexible} They admit beliefs very different from their own, and are modeled as agents with higher tolerances, $\theta_{T_{i}}\in[\theta_{F_{1}},\theta_{F_{2}}]$, such that, $\theta_{F_{2}}>\theta_{F_{1}}>\theta_{R}$.

\cite{kou2012multi} - \cite{liang2013opinion} employ a similar strategy, where, agents are classified based on predetermined values of threshold.
\subsection{Classification of Groups}

The groups are classified depending on the initial distribution of their agents' beliefs. In real, opinions may not be uniformly spread.  Instead, there could be a popular opinion with a few others scattered in the vicinity. This pattern is modelled using a truncated Gaussian model, whose mean ($\mu$) and standard deviation ($\sigma$), represent the popular opinion and the nature of the group, respectively. This is not only consistent with \cite{boccara2008models}, but are also subtantiated by experiments carried out in \cite{de2014learning} and \cite{chacoma2015opinion}. Accordingly, the groups are classified as,
\subsubsection{Conservative} The agents' initial beliefs are typically in the vicinity of a popular opinion. Using a truncated Gaussian distribution to model the initial spread, opinions take values around the mean ($\mu$); since, such groups allow a few agents with contrary opinions (in Fig. \ref{fig5b}). This is addressed by choosing a small spread, $\sigma \in [0,\theta_{C}]$.
\subsubsection{Liberal} It admits varied opinions, and this diversity is reflected in a larger spread, $\sigma \in [\theta_{L_{1}},\theta_{L_{2}}]$, where, $\theta_{L_{2}}>\theta_{L_{1}}>\theta_{C}$. Unlike a conservative group, there are fewer agents with beliefs close to the mean (in Fig. \ref{fig5d}).
\begin{figure}[t!]
	\subfigure[PDF of a conservative group]{\includegraphics[width=4.cm]{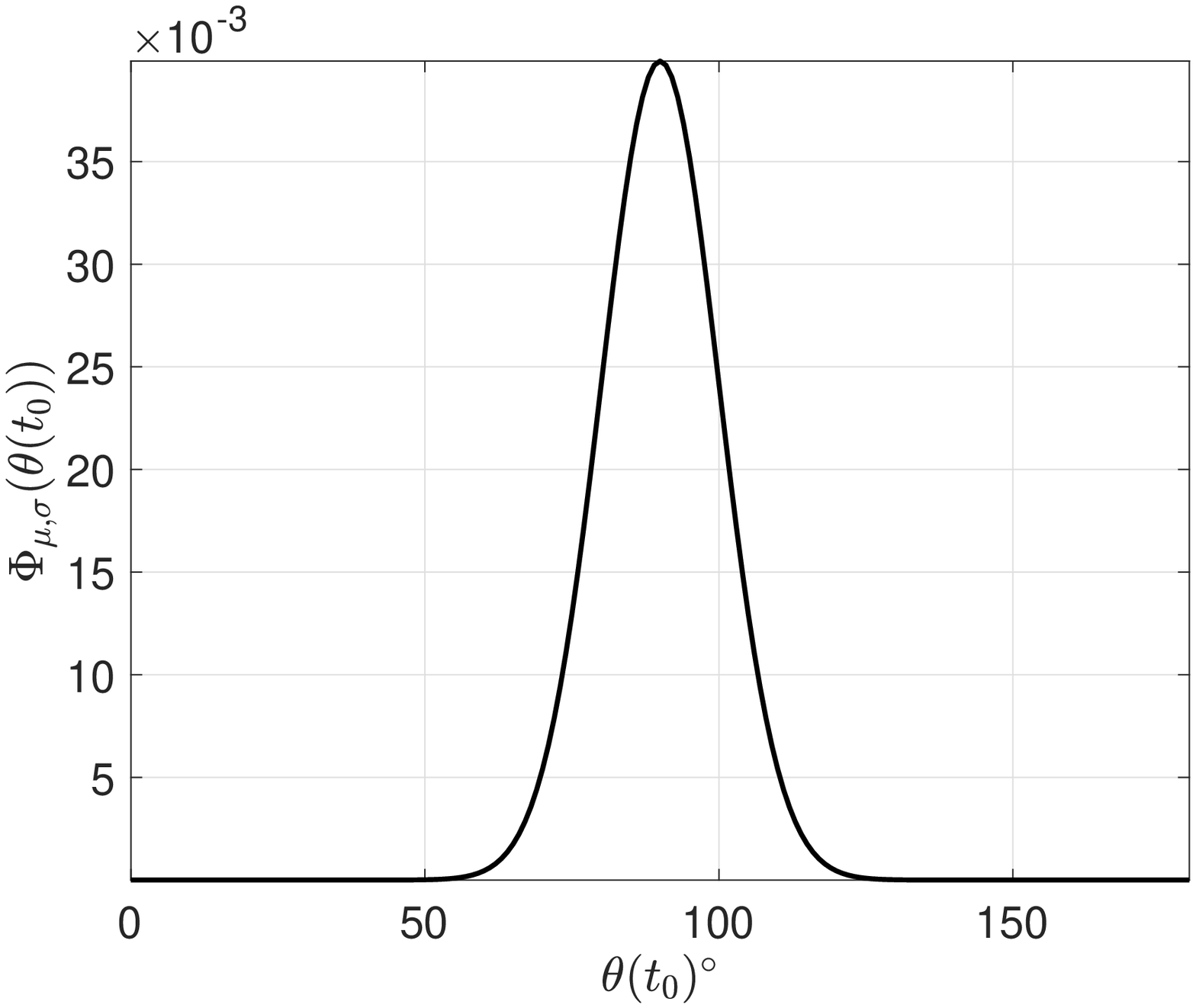}
	\label{fig5a}}
	\hfil
	\subfigure[Conservative group opinions]{\includegraphics[width=4.cm]{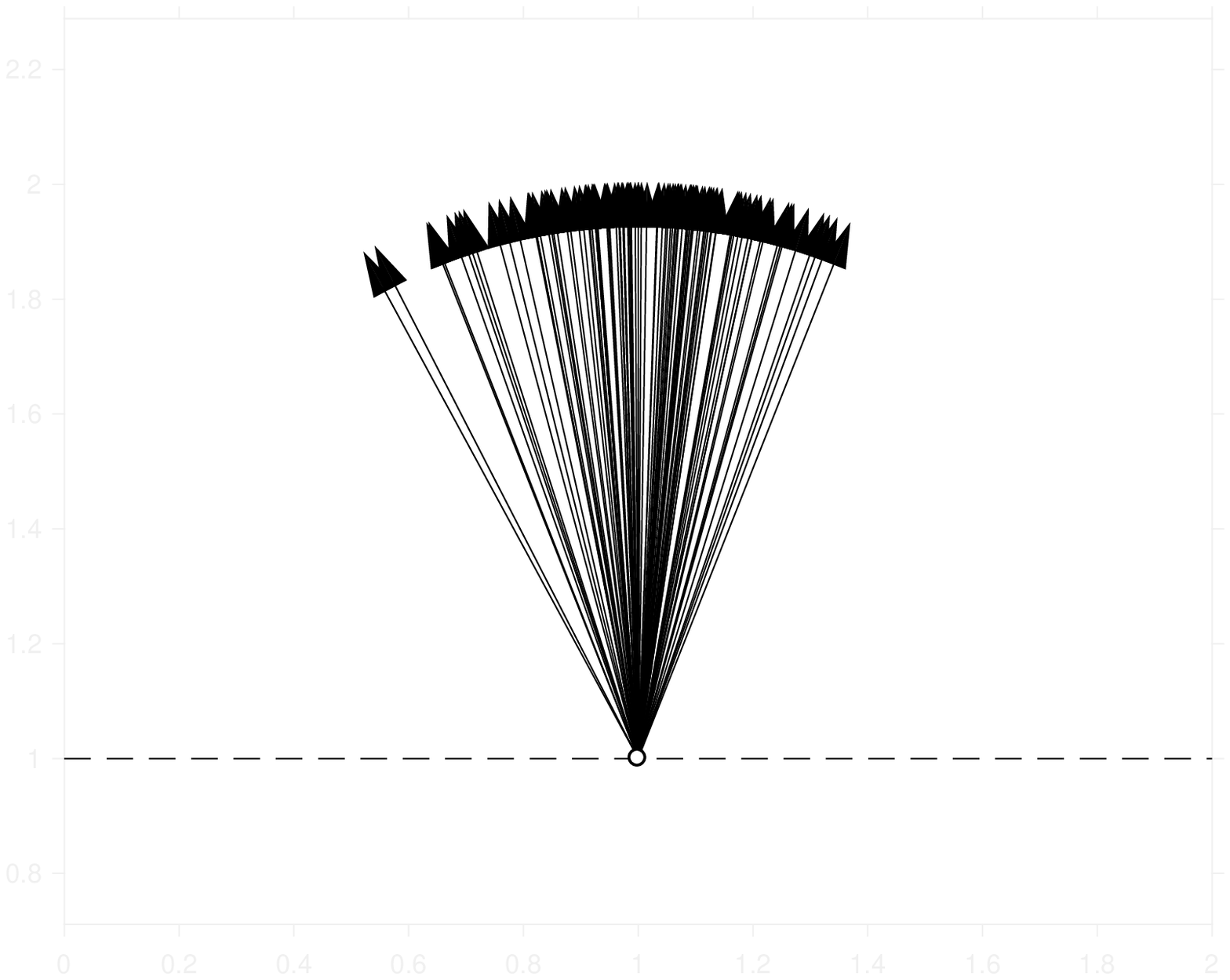}\label{fig5b}}
		\subfigure[PDF of a liberal group]{\includegraphics[width=4.cm]{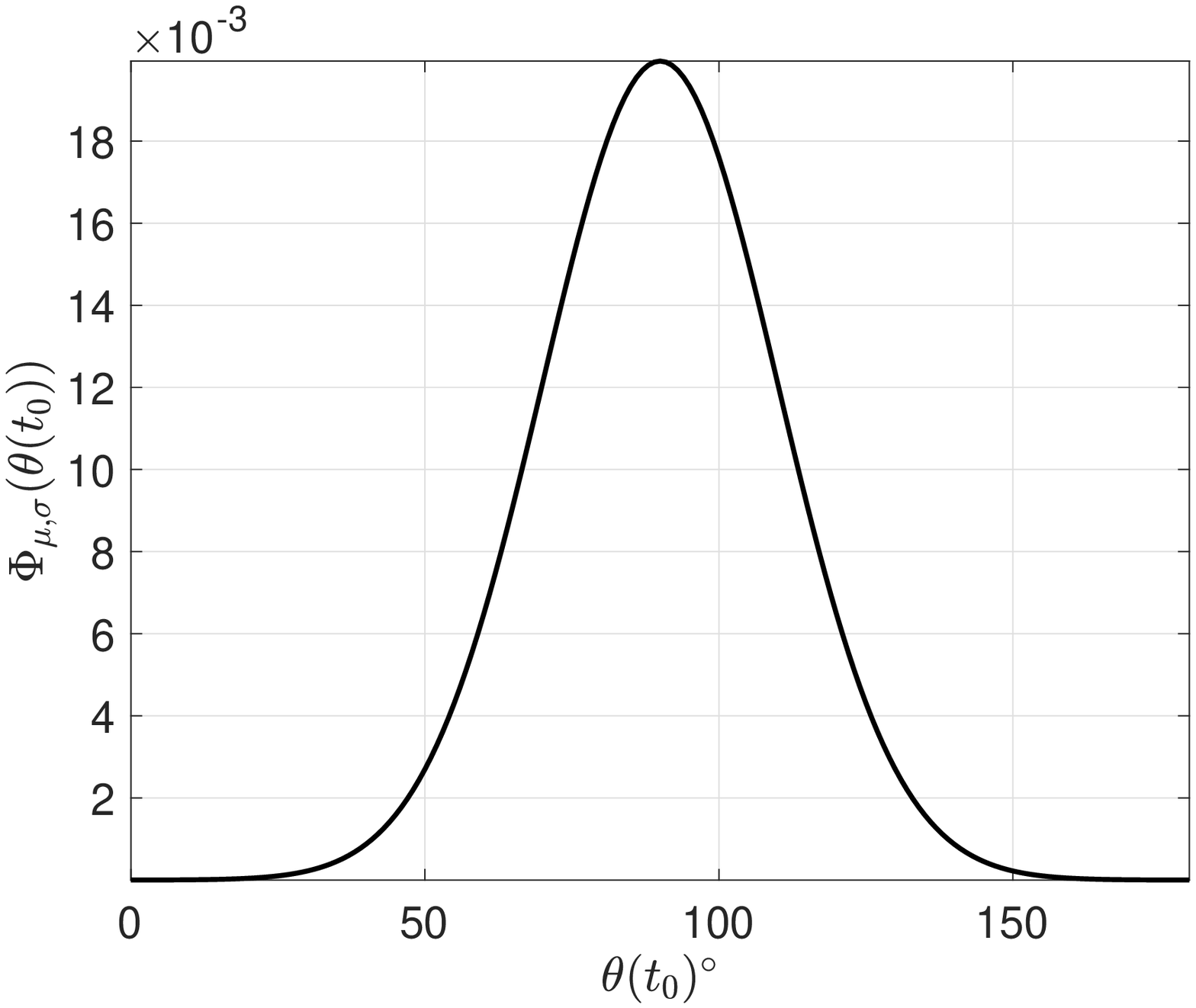}\label{fig5c}}
	\hfil
	\subfigure[Liberal group opinions]{\includegraphics[width=4.cm]{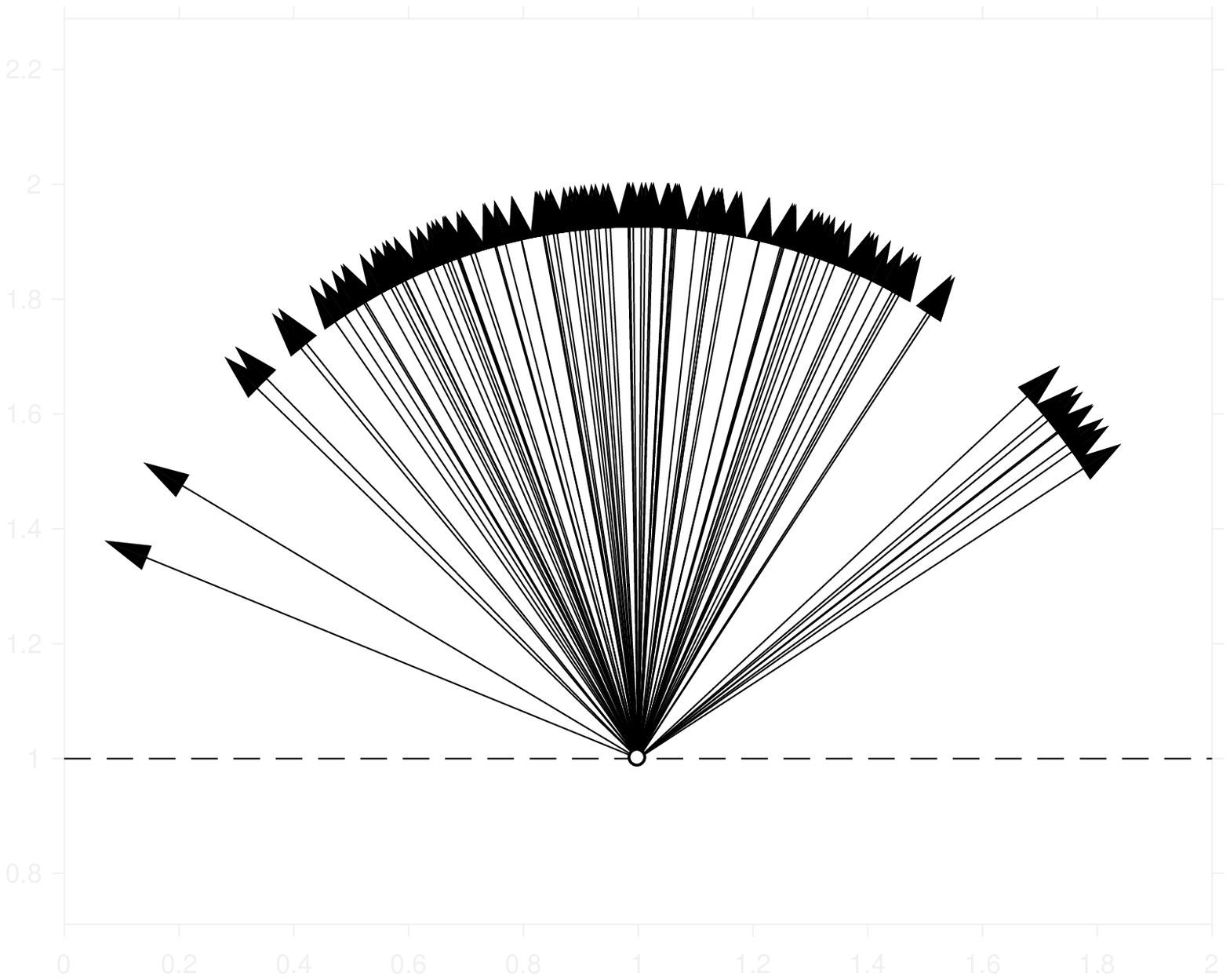}\label{fig5d}}
	\caption{Truncated Gaussian distribution of initial opinions in groups}
	\label{fig5}\vspace{-15pt}
\end{figure}

The names accorded to these groups have no political connotations, and have been chosen as they best fit the description.
 
\section{STRUCTURAL PROPERTIES OF THE NETWORK}

Consensus in a group can either be a global or a local phenomenon. Although global consensus is ideal, in real, it often happens that the members of a group separate to form smaller opinion groups. It would be interesting to discern the possible causes of this phenomenon. To this end, we undertake a diagnostic approach to identify the edges and the corresponding agents responsible. As a consequence of the proposed model, the collection of these edges have been observed to possess some unique characteristics.

Consider an initial network, $G(t_{0})=(V,E(t_{0}))$. The beliefs updated using (\ref {e1}) is said to have converged when,
\begin{equation}
\theta({t_{k}})-\theta({t_{k}-1})~=~0,~\text{for some}~k>0.
\end{equation}
The converged network is $G(t_{f})$, where, $t_{f}:=t_{k}$. When there is local consensus, there will be more than one connected component in $G(t_{f})$. To identify the responsible edges, $G(t_{f})$ is compared with $G(t)$. The edge cut-set ($E_{cs}(t)$) is a collection of such edges with interesting properties: (\romannum{1}) There exists a unique set of agents and edges that are responsible for group splits, (\romannum{2}) The newly formed ties that are acquired as the network evolves, are not responsible for the factions formed. These properties are established in the following theorems.
\begin{theorem}
	The edge cut-set $E_{cs}(t)$ of a $(G(t), G(t_{f} ))$ pair is unique and minimal.
\end{theorem}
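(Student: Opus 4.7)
The plan is to reformulate the edge cut-set as the edge-boundary of the vertex partition induced by the connected components of the converged graph $G(t_f)$, and then establish uniqueness and minimality through elementary set-theoretic arguments.

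First, I would write the connected components of $G(t_f)$ as $\mathcal{C} = \{C_1, \ldots, C_k\}$; since $G(t_f)$ is the well-defined outcome of the convergence criterion stated just before the theorem, this partition of $V$ is uniquely determined by the trajectory up to $t_f$. I would then take as the working definition
\begin{equation*}
E_{cs}(t) \;=\; \{(i,j) \in E(t) : i \in C_a,\; j \in C_b,\; a \neq b\},
\end{equation*}
which captures exactly those edges of the intermediate network $G(t)$ that bridge eventual opinion factions. Uniqueness is then immediate: given $E(t)$ and the partition $\mathcal{C}$, the above set is fixed by a simple membership query on the partition, so no two distinct subsets of $E(t)$ satisfy this specification.

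For minimality I would argue by contradiction. Suppose some proper subset $E' \subsetneq E_{cs}(t)$ could also account for the faction structure, in the sense that the connected components of $(V, E(t)\setminus E')$ are precisely $\mathcal{C}$. Then there exists an edge $e = (i,j) \in E_{cs}(t) \setminus E'$ with $i \in C_a$ and $j \in C_b$ for some $a \neq b$. Since $e \notin E'$, it survives in the residual graph $(V, E(t) \setminus E')$, which places $i$ and $j$ in the same connected component, contradicting the assumption that $C_a$ and $C_b$ are distinct components of the residual graph. Hence no proper subset of $E_{cs}(t)$ separates the vertices into the factions of $G(t_f)$, establishing minimality.

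The main obstacle I anticipate is not in the graph-theoretic bookkeeping, which is essentially a statement about the edge-boundary of a fixed vertex partition, but in aligning the informal description in the text (\emph{edges responsible for group splits}) with a precise definition that makes the theorem content-bearing rather than tautological. Once the cut-set is identified with the cross-partition edges of $E(t)$ induced by the components of $G(t_f)$, both claims follow from elementary arguments, and the dynamic nature of the network (ties gained or lost between $t$ and $t_f$) plays no role in the proof because the theorem concerns only the pair of snapshots $(G(t), G(t_f))$.
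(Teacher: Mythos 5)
Your proposal is correct and follows essentially the same route as the paper: you identify $E_{cs}(t)$ with the cross-component edges of $G(t)$ relative to the faction partition of $G(t_f)$, get uniqueness because that set is determined by the partition, and get minimality by the surviving-cross-edge contradiction, which is exactly the paper's argument. Your formulation in terms of the component partition of the residual graph $(V, E(t)\setminus E')$ is in fact a slightly cleaner statement of what the paper means by $(G(t)-E')=G(t_f)$, since $G(t_f)$ may contain newly formed ties absent from $G(t)$.
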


\begin{proof}
	Assume $E_{cs}(t)$ is not unique, then,
	\begin{equation}
	\exists\ E_{cs}^{'}(t)\subset\{ E(t)\setminus E_{cs}(t)\} \ni  (G(t)-E_{cs}^{'}(t))=G(t_{f})
	\end{equation}
	This implies that there exist edges in $G(t)$ between components of $G(t_{f})$ that are not in $E_{cs}(t)$. But, $E_{cs}(t)$ is the set of all such cross-connections, and therefore $E_{cs}^{'}(t)=E_{cs}(t)$. This proves uniqueness of $E_{cs}(t)$.
	
	Now, Assume that $E_{cs}(t)$ is not minimal, then,
	\begin{equation}
	\exists (i,j)\in E_{cs}(t)\ni (G(t)-\{E_{cs}(t)\setminus(i,j)\})=G(t_{f})
	\label{e13}
	\end{equation}
	But, $(i,j)\in E_{cs}(t)$, implies that it is an edge in $G(t)$ between two groups of $G(t_{f})$, and therefore $(i,j)$ cannot be redundant or (\ref{e13}) cannot be true. This proves minimality. 
\end{proof}

\begin{theorem}
	If $E_{cs}(t_{0})$ is an edge cut-set corresponding to a pair $(G(t_{0}),G(t_{f}))$, then any $E_{cs}(t)$ corresponding to a pair $(G(t),G(t_{f})),\  t_{0}< t \leq t_{f}$, is a sub-set of $E_{cs}(t_{0})$. 
\end{theorem}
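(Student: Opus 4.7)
My plan is to show that every edge in $E_{cs}(t)$ already belongs to $E_{cs}(t_0)$. Fixing an arbitrary $(i,j) \in E_{cs}(t)$, by definition this means $(i,j) \in E(t)$ and $i, j$ lie in distinct connected components of $G(t_f)$. The ``cross-component'' condition is a property of the terminal graph alone and does not depend on $t$, so the inclusion reduces to showing that $(i,j) \in E(t_0)$ --- equivalently, that $(i,j)$ is not a tie forged strictly after $t_0$. I would attack this by contradiction: suppose $(i,j) \notin E(t_0)$, so $(i,j)$ was forged at some intermediate step $t^* \in (t_0,t]$ by promotion from a non-direct tie, giving $j \in N_{ND_i}(t^*)$. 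Two facts follow: first, $|\theta_i(t^*) - \theta_j(t^*)| \leq \theta_{T_i}$, so the opinions lie within tolerance at $t^*$; second, there is a direct-tie path of length $2$ or $3$ from $i$ to $j$ in $G(t^*)$, so $i$ and $j$ sit in the same connected component of $G(t^*)$.

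The next step is to lift this same-component statement from $t^*$ up to $t_f$, producing a contradiction with $i, j$ lying in different components of $G(t_f)$. Once $(i,j)$ is promoted to a direct tie it is itself a one-hop path between them, and the removal rule ($|\theta_i-\theta_j|>\theta_{T_i}$) alone dictates whether it survives. Since $(i,j) \in E(t)$ by hypothesis, the tie has not yet been removed, hence $|\theta_i(t) - \theta_j(t)| \leq \theta_{T_i}$. Under the averaging update (\ref{e1}), once two agents are mutually direct-tied their opinions each enter the other's weighted average, which resists their separation; at the convergence time $t_f$ they should therefore reside in a common opinion cluster, which contradicts the assumption that $i$ and $j$ sit in different components of $G(t_f)$.

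The main obstacle is making this last ``resist separation'' step airtight: other neighbors of $i$ could in principle pull $\theta_i$ far from $\theta_j$, breaking $(i,j)$ and simultaneously severing every alternative $i$-$j$ path through the component of $G(t^*)$ before $t_f$. In the write-up I would isolate this as a short structural lemma --- any pair of agents that are directly tied at some $t^*$ and still directly tied at $t$ end up in the same component of $G(t_f)$ --- proved by tracking how the asynchronous update keeps the tolerance $\theta_{T_i}$ satisfied at the endpoints of a surviving direct tie, together with the terminal stationarity of $G(t_f)$. Chaining this lemma with the non-direct-tie observation of the previous paragraph contradicts $(i,j) \in E_{cs}(t)$, yielding the claimed subset inclusion $E_{cs}(t) \subseteq E_{cs}(t_0)$.
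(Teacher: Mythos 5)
Your overall reduction is the same as the paper's: an edge of $E_{cs}(t)$ crosses components of $G(t_f)$, a condition that does not depend on $t$, so the whole theorem rests on showing that no tie forged after $t_0$ can ever sit across two components of $G(t_f)$ while it exists. The paper disposes of this in one line, essentially invoking as a modeling fact that newly forged ties (which by construction join an agent to a two- or three-hop neighbor, hence vertices of the \emph{same} component at the time of forging) are never responsible for the factions, and concludes $E_{cs}(t)\subseteq E_{cs}(t_0)$ immediately. You instead try to \emph{derive} this fact from the dynamics, and that is where the genuine gap lies: the ``structural lemma'' you isolate --- that a pair directly tied at $t^*$ and still tied at $t$ must end up in the same component of $G(t_f)$ --- is never proved, and the heuristic offered for it (``the averaging update resists separation'') does not follow from (\ref{e1}). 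Nothing in the update or the tie-removal rule prevents $|\theta_i-\theta_j|$ from exceeding $\theta_{T_i}$ at some time in $(t,t_f)$ under the pull of other neighbors and random non-direct interactions, with every alternative $i$--$j$ path severed as well; your own last paragraph concedes exactly this. Since the contradiction you want ($i,j$ in a common component of $G(t_f)$) is obtained only through that lemma, the argument as written does not close.

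It is worth noting that the lemma you would need is strictly stronger than what the statement requires, and completing it would demand an actual analysis of the coupled opinion/edge dynamics (or an added assumption on when forged ties can be lost), none of which appears in the paper either: the paper's proof asserts the key claim rather than deriving it. So your route is not wrong in structure --- it mirrors the paper's --- but to be a proof it must either supply the missing dynamical argument for the lemma or retreat, as the paper does, to stating the non-responsibility of new ties as a property of the model rather than a consequence of (\ref{e1}).
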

\begin{proof}
	Let $E_{cs}(t_{0})$ be the edge cut-set corresponding to $(G(t_{0}),G(t_{f}))$. For any $t>t_{0}$, we know that $E_{cs}(t)$ will have no new edges between vertices in different components of $G(t)$. Thus, it is apparent that no new edge is added to $E_{cs}(t),\forall t>t_{0}$, and therefore $E_{cs}(t)\subseteq E_{cs}(t_{0})$.
\end{proof}

\begin{corollary}
	For any $(G(t_{0}),G(t_{f}))$ pair, the edge cut-set $E_{cs}(t_{0})$ is the largest minimal set among $\{E_{cs}(t),\ \forall t>t_{0}\}$.
\end{corollary}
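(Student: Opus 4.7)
The plan is to derive the corollary as an essentially immediate consequence of Theorems~1 and 2, so the proposal is short and the only real content is identifying which of the two competing properties (``largest'' vs.\ ``minimal'') each theorem supplies.

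First, I would fix the pair $(G(t_{0}),G(t_{f}))$ and consider the family $\{E_{cs}(t):t_{0}<t\leq t_{f}\}$. For each $t$ in this range, Theorem~1 applied to the pair $(G(t),G(t_{f}))$ already certifies that $E_{cs}(t)$ is minimal (and unique); in particular, $E_{cs}(t_{0})$ is itself minimal. So the word ``minimal'' in the corollary requires no further argument---it is inherited elementwise from Theorem~1, and I would state this explicitly before proceeding.

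Next I would invoke Theorem~2 to get the containment $E_{cs}(t)\subseteq E_{cs}(t_{0})$ for every $t>t_{0}$. Combined with $E_{cs}(t_{0})\subseteq E_{cs}(t_{0})$, this says that $E_{cs}(t_{0})$ is an upper bound of the family with respect to set inclusion. Since $E_{cs}(t_{0})$ is itself a member of the family (taking the limiting case $t=t_{0}$, or simply appending it as the reference element), it is the maximum element under $\subseteq$, which is the precise reading of ``largest'' here. Passing to cardinalities gives $|E_{cs}(t)|\leq |E_{cs}(t_{0})|$ as a corollary of the inclusion, in case the statement is to be interpreted numerically rather than set-theoretically.

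The only point that could trip up the argument is a subtle mismatch of conventions: one must be careful that ``largest minimal set'' is not misread as ``a minimum over some family of maximal sets,'' which would be contradictory. I would therefore end the proof by reiterating that each $E_{cs}(t)$ is minimal in the sense of Theorem~1 (no proper subset realises the same cut), whereas ``largest'' refers to the ordering of these minimal sets among themselves under inclusion; the two notions are compatible because minimality is a property of each individual cut-set, while largeness is a property of the family. No nontrivial obstacle is anticipated---the work is entirely in assembling Theorems~1 and 2, and I do not expect any case analysis or additional lemmas to be required.
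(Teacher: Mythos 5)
Your proposal is correct and matches the paper's intent exactly: the paper gives no separate proof of the corollary, treating it as an immediate consequence of Theorem~1 (each $E_{cs}(t)$ is minimal and unique) and Theorem~2 ($E_{cs}(t)\subseteq E_{cs}(t_{0})$ for $t>t_{0}$), which is precisely the assembly you carry out. Your closing clarification that ``minimal'' is a property of each individual cut-set while ``largest'' orders them under inclusion is a reasonable reading of the statement and introduces no deviation from the paper's argument.
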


\section{SIMULATION RESULTS \& DISCUSSIONS}

A group of 100 agents has been considered with the simulation time set at 250 units. Unlike in \cite{fu2015opinion,liang2013opinion}, where, tolerance of agents is assigned according to some distribution, here, all the agents of a type are assumed to have a fixed tolerance, consistent with \cite{kou2012multi}. For instance, all rigid agents have a tolerance $\theta_{R} \in \{10^\circ,30^\circ\}$, while all flexible agents have a tolerance $\theta_{F} \in \{40^\circ,80^\circ\}$. The distribution of initial opinions for conservative and liberal groups are set at $\mu = 90^\circ$ with $\sigma \in \{10^\circ,15^\circ\}$ and $\sigma \in \{20^\circ,25^\circ\}$, respectively. The initial network is derived from opinions and tolerances. There are no assumptions made regarding the structure of the initial network. To prevent trivial initial networks - like well-connected ones - the out-degree of each vertex is capped at 25. The constant of proportionality, $K_{C}$, is set to 100.

To account for diverse opinions, network topologies, and interaction patterns, Monte Carlo simulations have been carried out. To this end, we define four group configurations,

\begin{enumerate}
	\item Configuration 1 is a conservative group with all rigid agents, or more rigid agents than flexible ones.
	\item Configuration 2 is a conservative group with all flexible agents, or more flexible agents than rigid ones.
	\item Configuration 3 is a liberal group, where, the density of agents is similar to Configuration 1.
	\item Configuration 4 is also a liberal group, where, the density of agents is similar to Configuration 2.
	
\end{enumerate}
 Each configuration with fixed agent tolerances is subjected to 100 simulation runs. Each run corresponds to a particular opinion distribution, fixed agents' density and tolerance - $\{10^\circ, 40^\circ\},~\{10^\circ, 80^\circ\},~\{30^\circ, 40^\circ\}~\text{or}~\{30^\circ, 80^\circ\}$. Starting with a group of all flexible agents, we study the impact of density of rigid agents on consensus. This is quantified by, 
\begin{align}
\small
\begin{split}
&\text{Rate of Consensus} = \frac{\text{ No. of times group consensus is achieved}}{\text{No. of simulation trials}}.
\end{split}
\end{align}
 The consensus rate for a configuration is essentially the average occurence of group consensus.

\subsection{Conservative Groups}

In such groups, when flexible members are in a majority, the dialogue between agents mostly yields consensus. This is observed in Fig. \ref{fig6a}, where the rate of consensus is high. On increasing the number of rigid members in the group, factions are formed. This is captured by a marginal drop in the consensus rate in Fig. \ref{fig6a}. This is because, the rigid members of the group could lose ties with their neighbors upon marginally changing their beliefs. Such changes to the neighborhood results in an eventual split of an originally connected group, rendering it highly intolerant to unpopular opinions. Additionally, from the plots in Fig. \ref{fig6a}, it can be inferred that flexible agents facilitate group consensus. This is substantiated by a drastic decline in consensus rate when their population is barely 10\%. Another key observation is regarding the tolerance of agents; as the tolerance of rigid agents is increased from $10^\circ$ to $30^\circ$, there is a remarkable improvement in the rate of consensus.

\begin{figure*}[t!]
	\centering
	\subfigure[Conservative group with $\sigma = 10^{\circ}$ (top) and $\sigma = 15^{\circ}$ (bottom) ]{\includegraphics[width=0.45\textwidth]{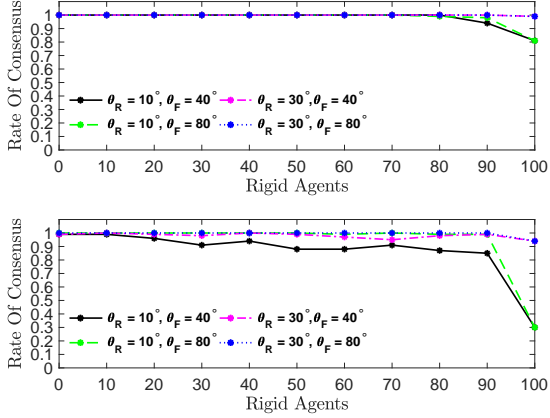}
		\label{fig6a}\vspace{-10pt}}
	\hfil
	\subfigure[Liberal group with $\sigma = 20^{\circ}$ (top) and $\sigma = 25^{\circ}$ (bottom)]{\includegraphics[width=0.45\textwidth]{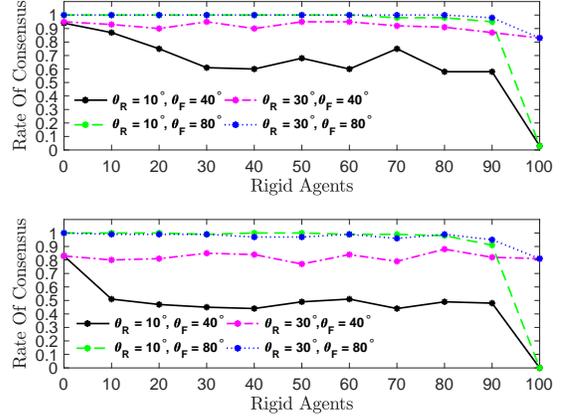}\label{fig6b}\vspace{-10pt}}
		\caption{Rate of consensus in conservative and liberal groups with varying agent densities and individual tolerance}
	\label{fig6}\vspace{-15pt}
\end{figure*}

\subsection{Liberal Groups}	

In these groups, the rate of consenus is impacted not only by the presence of rigid agents, but also a wider belief distribution. Intuitively, a group with all flexible individuals should have a relatively high rate of consensus for different values of $\theta_{T_{i}}$ and different initial spreads, which is also observed in Fig. \ref{fig6}. However, because of the belief spread, unlike conservative groups, the behaviour of a liberal group is sensitive to the inclusion of even a small number of rigid agents. Upon the inclusion of rigid agents in an all flexible group, there is a significant drop in the rate of consensus, which can be observed in Fig. \ref{fig6b}. The earlier observation about flexible agents being facilitators of group consensus holds. This is again justified by an eventual decline in consensus rate when their population is reduced to $10\%$ or lower. Similarly, an increase in tolerance of rigid agents positively impacts consensus. 

\begin{figure}[t!]
	\centering
	\subfigure[Group of rigid agents ($\mu = 90^\circ,~\sigma = 20^{\circ}$) ]{\includegraphics[width=5.5cm]{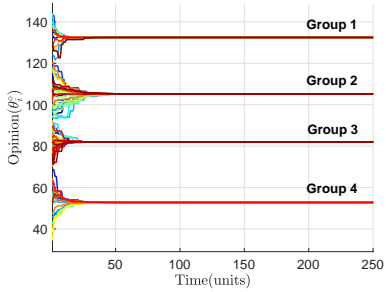}
		\label{fig8e}\vspace{-10pt}}
	
	\subfigure[$t=0$ ]{\includegraphics[width=4.cm]{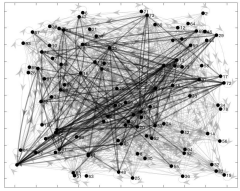}
		\label{fig8a}}
	\hfil
	\subfigure[$t=5$]{\includegraphics[width=4.cm]{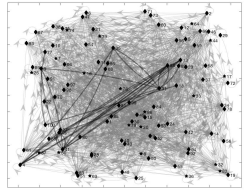}\label{fig8b}}
	
	\subfigure[$t=10$]{\includegraphics[width=4.cm]{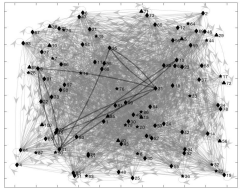}
		\label{fig8c}}
	\hfil
	\subfigure[$t=15$]{\includegraphics[width=4.cm]{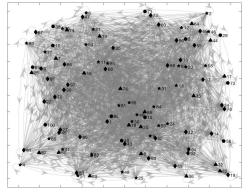}\label{fig8d}}
	\caption{Evolution of opinions ((a)) and the underlying network ((b)-(e)) indicating $E_{cs}(t)$ (bold edges) and factions (different node markers)}
	\label{fig8}\vspace{-15pt}
\end{figure}

Whenever there is a decline in the rate of consensus in any of the four configurations (see Fig. 4), it is indicative of instances when factions are formed. Higher the number of such instances, lower is the rate of consensus. In Fig. \ref{fig8e}, the evolution of opinions in a group of rigid individuals is illustrated. The lack of consensus results in more than one opinion group, which are formed when individuals with dissimilar opinions lose their ties belonging to the edge cut-set ($E_{cs}(t_{0})$). From Fig. \ref{fig8a} - \ref{fig8d}, it can be seen that the number of ties in the set decreases over time, $E_{cs}(t_{15})\subset E_{cs}(t_{10})\subset E_{cs}(t_{5}) \subset E_{cs}(t_{0})$. It can be observed that the new ties formed do not cause group splits. The ties in $E_{cs}(t_{0})$ are all lost when the network converges (Fig. \ref{fig8d}), resulting in four groups.

Thus, a conservative group with flexible agents in majority evolves into a moderately tolerant society; where minor disagreements do not create break-away factions. However, with the inclusion of rigid agents, despite the narrow initial spread of beliefs, there may be opinion groups formed. Essentially, a conservative society is less accommodative of rebelling minorities, which is similar to the observation in \cite{porfiri2007decline}. On the contrary, a liberal group with rigid agents evolves into a pluralistic society, while one with all flexible individuals always tends to form agreeable groups. Although not with similar societal construct or comparable adaptive interactions, the works \cite{kou2012multi,fu2015opinion,liang2013opinion} report similar predictions. Further, the role played by flexible agents as prime facilitators of group consensus, is worth reiterating. In-line with \cite{kou2012multi,fu2015opinion,liang2013opinion}, \cite{yang2006consensus}, an increase in the individual tolerance has shown to significantly transform the outlook of a group into a more progressive one.

\section{CONCLUSIONS}
We discussed a modified version of the Vicsek model to study influence dynamics, its spread and consequence on opinions. The heading angles of agents in \cite{vicsek1995novel} are considered analogous to beliefs, while ignoring their physical distances. Like in bounded-confidence models, a closest-opinion rule has been proposed. In comparison with \cite{kou2012multi,fu2015opinion,liang2013opinion}, the proposed model is equipped to handle the subtleties in human interactions, and opinion formation. 

The structural properties of the evolving network that are a consequence of the proposed model have been analytically established. With these preliminary results, in subsequent works, we intend to develop them further to predict agents responsible for group splits and regulate group behavior. To emulate real-life scenarios, we have considered different agent types, groups, and their compositions. In addition to confirming anticipated behavior, the results are in sync with similar reports in literature. 

Overall, the modified Vicsek-like model seems to demonstrate the potential to explain many observed phenomena in influence dynamics, group fragmentation and the role of agents. To realise its maximum potential, several avenues and future research directions can be pursued. The effect of opinion density on agreement values can be explored. In addition, the impact of leaders on group behavior may be evaluated.

\bibliographystyle{IEEEtran}
\bibliography{ref1}

\end{document}